\def\BibTeX{{\rm B\kern-.05em{\sc i\kern-.025em b}\kern-.08em
    T\kern-.1667em\lower.7ex\hbox{E}\kern-.125emX}}
  \mathchardef\mathcomma\mathcode`\,
\pgfplotsset{compat=1.14}
\begin{document}

\title{Priority Flow Admission and Routing in SDN: Exact and Heuristic Approaches
}

\author{\IEEEauthorblockN{Jorge L\'opez, Maxime Labonne, Claude Poletti, Dallal Belabed
\IEEEauthorblockA{
\textit{Airbus Defence and Space}\\
\'Elancourt, France \\
\{jorge.lopez-c, maxime.labonne, claude.poletti, dallal.belabed\}@airbus.com
}}}
\maketitle

\begin{abstract}
This paper proposes a novel admission and routing scheme which takes into account arbitrarily assigned priorities for network flows. The presented approach leverages the centralized Software Defined Networking~(SDN) capabilities in order to do so. Exact and heuristic approaches to the stated Priority Flow Admission and Routing~(PFAR) problem are provided. The exact approach which provides an optimal solution is based on Integer Linear Programming~(ILP). Given the potentially long running time required to find an exact and optimal solution, a heuristic approach is proposed; this approach is based on Genetic Algorithms~(GAs). In order to effectively estimate the performance of the proposed approaches, a simulator that is capable of generating semi-random network topologies and flows has been developed. Experimental results for large problem instances (up 50 network nodes and thousands of network flows), show that: i) an optimal solution can be often found in few seconds (even milliseconds), and ii) the heuristic approach yields close-to-optimal solutions (approximately 95\% of the optimal) in a fixed amount of time; these experimental results demonstrate the pertinence of the proposed approaches.

\end{abstract}

\begin{IEEEkeywords}
Priority flows, admission, routing, integer linear programming, genetic algorithms, reinforcement learning
\end{IEEEkeywords}

\newtheorem{proposition}{Proposition}
\newtheorem{definition}{Definition}

\section{Introduction}\label{sec:intro}
In recent years, computer networks have become highly dynamic, and new solutions for their fast deployment and management have been proposed, providing extraordinary capabilities. Yet, these capabilities continue to be explored, and new possibilities remain open. Software Defined Networking~(SDN) \cite{sdn} is a network management technology, which enables fast reconfiguration through a centralized interface. SDN is currently deployed in large data centers, and mission critical networks (see for example \cite{googlesdn}). Moreover, taking advantage of the capabilities provided by SDN, novel functional network capabilities are proposed; for example, using load balancing for maximizing the resource utilization \cite{LBSDN} or maximizing the Quality of Experience~(QoE) of end-users \cite{httpQoESDN}. 

In computer networks where the resources (or medium) are shared, the usual strategy is to allocate certain \emph{dedicated} bandwidth to users (or applications), according to a predefined Service Level Agreement~(SLA). However, this scheme is not ideal for prioritizing important traffic, especially when the network is congested. Particularly, there is no notion of admission (or rejection) of passing traffic based on its priority. An interesting perspective is to take admission and possibly even routing decisions (drop or re-route) in order to privilege traffic with higher priority. In this paper, we aim at providing solutions for this issue by leveraging the centralized management capabilities of SDN. 

In order to better understand the problem we aim at solving, let us consider a simple example. Consider the data-plane depicted in Fig.~\ref{fig:ex_topo}, where the nodes in the graph represent network nodes that forward traffic, through the links between them, and the bandwidth capacity of each link is shown in the associated label (assume a predetermined bandwidth unit, e.g., megabytes per second, Mbps for short). Assume the flows traversing the network are listed in Table~\ref{tab:ex_flows}; furthermore, assume a larger value implies a higher priority. A solution is to assign the path $N_1\rightarrow N_2$ to the flow with ID 2 (and highest priority), the path $N_1\rightarrow N_3\rightarrow N_2$ to the flow with ID 4, the path $N_1\rightarrow N_4\rightarrow N_2$ to the flow with ID 1, and no path (drop) for the flow with ID 3 (and lowest priority). Ideally, all flows should be admitted; in order to do so, the problem can be reduced to finding the appropriate path for each flow. However, whenever this is not possible, flows with higher priority should be favored (i.e., flows with lower priority can be dropped). We note that in order to avoid a few (or a single) high priority flows (with large bandwidth requirements) from occupying all the available resources, we do not focus on a fully hierarchical admission of the flows. Rather, we focus on maximizing the total value of the priorities of the admitted flows. Nevertheless, a strict prioritization of network flows is possible, and we also discuss this possibility (see Section~\ref{sec:exact}).

\begin{figure}[!htb]
    \centering
    \begin{tikzpicture}[node distance=4cm,>=stealth',bend angle=45,auto]
    \tikzstyle{switch}=[circle,thick,draw=black!75, inner sep=0.075cm]
    \tikzstyle{host}=[circle,thick,draw=black!75,fill=black!80,text=white, inner sep=0.075cm]
    \tikzstyle{undirected}=[thick]
    \tikzstyle{directed}=[thick,->]

    \node[switch] (s1) {$N_1$};
    \node[switch, right of=s1] (s2) {$N_2$};
    \node[switch, below of=s1] (s3) {$N_3$};
    \node[switch, below of=s2] (s4) {$N_4$};
   
    \path   
            (s1)    edge[directed, bend left=10]    node[above] {2}   (s2)
                    edge[directed, bend right=10]    node[above, rotate=90] {2}   (s3)
                    edge[directed, bend left=10]    node[above, rotate=-45] {2}   (s4)
            (s2)    edge[directed, bend left=10]    node[below] {1}  (s1)
                    edge[directed, bend right=10]    node[above, rotate=45] {3}   (s3)
                    edge[directed, bend right=10]    node[above, rotate=90] {1}   (s4)
            (s3)    edge[directed, bend right=10]    node[above, rotate=-90] {3}   (s1)
                    edge[directed, bend right=10]    node[below, rotate=45] {2}   (s2)
                    edge[directed, bend left=10]    node[above] {1}   (s4)
            (s4)    edge[directed, bend left=10]    node[below, rotate=-45] {4}   (s1)
                    edge[directed, bend right=10]    node[above, rotate=-90] {2}   (s2)
                    edge[directed, bend left=10]    node[below] {2}   (s3);
\end{tikzpicture}
    \caption{Example network topology}
    \label{fig:ex_topo}
\end{figure}
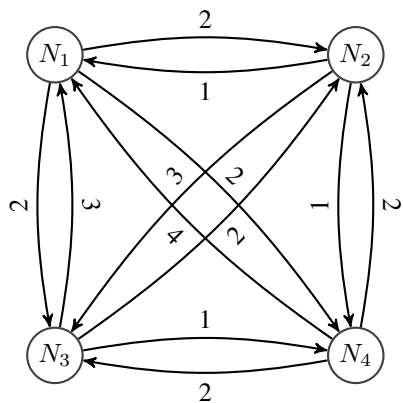 

\begin{table}[!htb]
	\centering
	\begin{tabular}{|c|c|c|c|c|}
		\hline
		\textbf{ID} & \textbf{source} 	& \textbf{destination} & \textbf{required bandwidth} & \textbf{priority}	\\ \hline
		1           & $N_1$         & $N_2$        & 2            & 10               		\\ \hline
		2           & $N_1$         & $N_2$        & 2            & 1000	             	\\ \hline
		3           & $N_3$         & $N_2$        & 1            & 1                 		\\ \hline
		4           & $N_1$         & $N_2$        & 2            & 100              		\\ \hline
	\end{tabular}
	\caption{Example flows}
	\label{tab:ex_flows}
\end{table}

We formally define the problem and propose an exact solution through integer linear programming (see Section~\ref{sec:exact} for the exact solution and Section~\ref{sec:prelim} for the related concepts). As we aim at providing a dynamic and runtime routing solution, searching for an exact optimal solution may be unfeasible in a reasonable time, for a large number of nodes and flows in the network. Thus, a heuristic approach based on evolutionary algorithms is proposed (see Section~\ref{sec:heuristic}). In order to estimate both the time feasibility of the exact approach and the optimality of the proposed heuristic approach, an experimental evaluation has been performed (see Section~\ref{sec:experim}). Our experiments show that the exact solution can be often found in a reasonable amount of time, and furthermore when this is not possible, our heuristic approach can provide close-to-optimal results in a fixed amount of time. Based on those experiments we conclude that both approaches are applicable and complementary to solve the stated problem.

\section{Preliminaries}\label{sec:prelim}
\input{figures/TIKZ_network_drawing}
In our attempt to make this paper as self-contained as possible, this section briefly reviews some of the necessary concepts used throughout the paper.

\subsection{Software Defined Networking}
In traditional networks, the configuration, management, and data-forwarding interfaces are distributed / located at each of the data forwarding devices (switches / routers) in the data-plane. The data-paths (the paths network packets follow in a data-plane) in the network are the result of the configuration on each of the forwarding devices; each of the devices has a local configuration and management interface. Thus, in order to re-configure the data-paths, several devices must be re-configured; as a consequence, while re-configuring each device the network may be in an inconsistent state, the process can be error-prone and slow. As an example, assume a data-plane in a traditional network as (only the data-plane) shown in Fig.~\ref{fig:sdn_ex}. Assume all flows from $h_1$ to $h_2$ follow the data-path depicted in solid arrows ($h_1\rightarrow r_1\rightarrow r_2\rightarrow h_2$); consider the link $(r_1,r_2)$ is too loaded. In order to re-configure some of the traffic to use an alternative data-path, for example $h_1\rightarrow r_1\rightarrow r_3\rightarrow r_2\rightarrow h_2$ (depicted in dashed arrows in Fig.~\ref{fig:sdn_ex}), the forwarding devices $r_1,r_3$, and $r_2$ must be re-configured, independently.

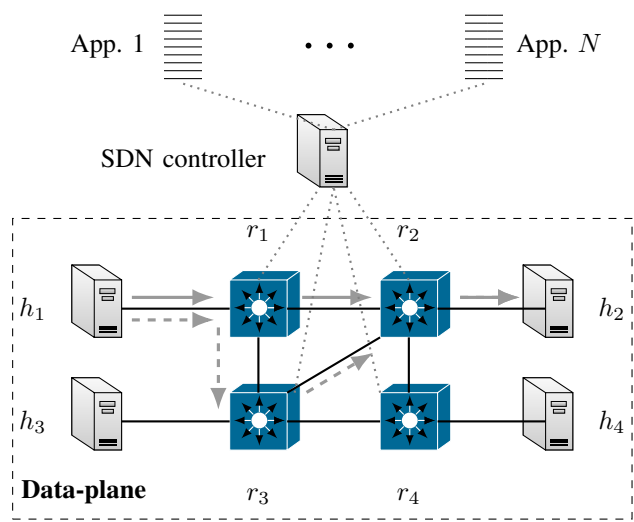
\begin{figure}[!htb]
    \centering
    \begin{tikzpicture}
    \node[server]                               (h1)    {};
    \node               at ([xshift=-1cm]h1)    (h1l)   {$h_1$};
    \node[l3 switch]    at ([xshift=2cm]h1)     (s1)    {};
    \node               at ([yshift=1cm]s1)     (s1l)   {$r_1$};
    \node[l3 switch]    at ([xshift=2cm]s1)     (s2)    {};
    \node               at ([yshift=1cm]s2)     (s2l)   {$r_2$};
    \node[server]       at ([xshift=2cm]s2)     (h2)    {};
    \node               at ([xshift=.7cm]h2)    (h2l)   {$h_2$};
    \node[l3 switch]    at ([yshift=-1.5cm]s1)  (s3)    {};
    \node               at ([yshift=-1cm]s3)    (s3l)   {$r_3$};
    \node[server]       at ([xshift=-2cm]s3)    (h3)    {};
    \node               at ([xshift=-1cm]h3)    (h3l)   {$h_3$};
    \node[l3 switch]    at ([xshift=2cm]s3)     (s4)    {};
    \node               at ([yshift=-1cm]s4)    (s4l)   {$r_4$};
    \node[server]       at ([xshift=2cm]s4)     (h4)    {};
    \node               at ([xshift=.7cm]h4)    (h4l)   {$h_4$};
        
    \draw[thick] (h1)--(s1);
    \draw[thick] (h2)--(s2);
    \draw[thick] (h4)--(s4);
    \draw[thick] (h3)--(s3);
        
    \draw[thick] (s1.east)--(s2.west);
    \draw[thick] (s3.east)--(s4.west);
    \draw[thick] (s1.south)--(s3.north);
    \draw[thick] (s2.south)--(s4.north);
    \draw[thick] (s2.south west)--(s3.north east);
        
    \node[rectangle,draw,dashed,fill=none,text height=3.5cm,text width=8cm, minimum width=8cm,minimum height=4cm] at ([xshift=.85cm, yshift=-.8cm]s1) (main)  {\textbf{Data-plane}};
        
    \draw[-Latex,very thick, gray!80, shorten <=0.2cm,shorten >=0.2cm]([yshift=0.15cm,xshift=0.3cm]h1.east)-- ([yshift=0.15cm]s1.west);
    \draw[-Latex,very thick, gray!80, shorten <=0.1cm,shorten >=0.1cm]([yshift=0.15cm,xshift=0.1cm]s1.east)-- ([yshift=0.15cm]s2.west);
    \draw[-Latex,very thick, gray!80, shorten <=0.3cm,shorten >=0.7cm]([yshift=0.15cm]s2.east)-- ([yshift=0.15cm]h2.west);
        
    \draw[-Latex, dashed, very thick, gray!80, shorten <=0.2cm,shorten >=0.2cm]([yshift=-0.15cm,xshift=0.3cm]h1.east)-- ([yshift=-0.15cm]s1.west);
    \draw[-Latex,dashed, very thick, gray!80, shorten <=0.1cm,shorten >=0.3cm]([yshift=-0.15cm, xshift=-0.15cm]s1.west)-- ([yshift=-0.15cm, xshift=-0.15cm]s3.west);
    \draw[-Latex,dashed, very thick, gray!80, shorten <=0.3cm,shorten >=0.1cm]([yshift=-0.15cm]s3.north east)-- ([yshift=-0.15cm]s2.south west);
    \draw[-Latex, dashed, very thick, gray!80, shorten <=0.3cm,shorten >=0.7cm]([yshift=0.15cm]s2.east)-- ([yshift=0.15cm]h2.west);

    \node[server]       at ([yshift=2cm,xshift=1cm]s1)     (ctrl)    {};
    \node               at ([xshift=-2cm]ctrl)  (ctrll)   {SDN controller};
        
    \draw[thick, dotted, gray] (s1.north)--(ctrl.south east);
    \draw[thick, dotted, gray] (s2.north)--(ctrl.south west);
    \draw[thick, dotted, gray] ([xshift=0.1cm]s3.north east)--(ctrl.south);
    \draw[thick, dotted, gray] (s4.north west)--(ctrl.south);
        
    \node[rectangle,pattern=horizontal lines, minimum width=0.5cm, minimum height=1cm]    at([yshift=1.5cm, xshift=-2cm]ctrl)    (app1)  {};
    \node[] at ([xshift=-1cm]app1)    (app1l) {App. 1};
    \node[] at ([xshift=2cm]app1)   (dots)  {\huge\ldots};
    \node[rectangle,pattern=horizontal lines, minimum width=0.5cm, minimum height=1cm]    at([xshift=2cm]dots)    (app2)  {};
    \node[] at ([xshift=1cm]app2)    (app2l) {App. $N$};
        
    \draw[thick, dotted, gray] (app1.south)--(ctrl.north);
    \draw[thick, dotted, gray] (app2.south)--(ctrl.north);
\end{tikzpicture}
    \caption{Example SDN architecture}\label{fig:sdn_ex}
\end{figure}

SDN overcomes these limitations by separating the control and data-plane layers \cite{sdn}. With a centralized SDN controller, SDN applications can automatically re-configure the SDN data-plane in a timely manner. Furthermore, the devices in the data-plane may have different configuration protocols and interfaces (called the southbound interface of the SDN controller), while the SDN controller has a single communication protocol (northbound interface) with the applications; thus, simplifying communication with heterogeneous and vendor-agnostic data-planes. Finally, SDN-enabled forwarding devices steer (route / forward) the incoming network packets based on so-called flow rules installed by the SDN applications (through the controller). A flow rule consists of three main (functional) parts: a packet matching part, an action part and a location / priority part. The matching part describes the values which a received network packet should have for a given rule to be applied. The action part states the required operations to perform to the matched network packets, while the location / priority part controls the hierarchy of the rules using tables and priorities. Finally, it is important to note that our approach is generic to any centralized routing management paradigm. Nonetheless, as SDN is the only widespread technology that provides the desired capabilities, we focus on SDN as an enabler technology.

\subsection{Integer Linear Programming~(ILP)}
An ILP problem has the following general form \cite{copbook}:
\begin{equation}
    \begin{aligned}
        &{\text{maximize}}    & \mathbf{c}^{T}\mathbf{x} \\
        &{\text{subject to}}  & \mathbf{Ax} \leq \mathbf{b}\\
        &                   & \mathbf{x} \geq \mathbf{0}\\
        &                   & \mathbf{x} \in \mathds{Z}^{n} 
    \end{aligned}
\end{equation}
where $\mathbf{x}$ is a vector of $n$ non-negative integers, $\mathbf{c}$ and $\mathbf{b}$ are integer vectors, and $\mathbf{A}$ is an integer matrix. 
The function to maximize is the \emph{objective function} representing the objective (or goal) of the optimization. The remainder of the problem formulation presents the \emph{constraints}, i.e., what the maximization of the objective function is subject to. An ILP problem represents in most cases a combinatorial optimization problem; there exists a finite number of possible solutions, and the solutions must be integer-valued. A given instance of the optimization problem is the tuple $(F, c)$, where $F\subseteq \mathds{Z}^{n}$ is a domain of feasible points, and $c$ is an objective function such that $c:F\to \mathds{Z}_{+}$; hereafter, $\mathds{Z}_{+}$ represents the set of non-negative integers. The solution of the optimization problem is to find $f \in F$ such that $c(f) \geq c(y); \forall y \in F$; in this case, the point $f$ is a globally optimal solution. 

Algorithms to solve linear programming problems have been studied since the 1940s. One of most popular ones is the Dantzig's Simplex algorithm \cite{dantzig}. Correspondingly, to solve ILP problems, the cutting plane methods based on Gomory's algorithm \cite{gomory} can be applied. Modern tools for solving instances of ILP problems as Gurobi \cite{gurobi} or CPLEX \cite{cplex}, use such cutting plane methods, branch and bound algorithms \cite{BnB}, branch and cut \cite{BnC} algorithms and  others. It is important to mention that even if ILP is NP-hard \cite{sipser}, in practice, problems with hundreds or thousands of variables can be solved in a few seconds, due to a large number of implemented heuristics. 

\subsection{Genetic Algorithms}

Genetic algorithms~(GAs) as described by Goldberg \cite{goldbergGAs} are ``search procedures based on the mechanics of natural selection and natural genetics.'' In general, a genetic algorithm consists of the following steps:

\begin{enumerate}
	\item \emph{Initialization} -- Create a set of initial candidate solutions (often randomly generated), usually referred to as the population. Such solutions for ILP problems, candidate solutions (also called individuals) can be the values of the variables of the problem. In general, the individuals can be encoded as some primitive data-type, known as the genotype, and such solutions have an interpretation (for example, which paths to assign to which flows) called the phenotype.
	\item\label{algo:ga_select} \emph{Selection} -- From the candidate solutions, select the most fit; for an ILP problem that would be those with higher value of the objective function, otherwise a fitness function needs to be defined. Note that, often the selection is not only done based on the individual's fitness but, also based on \emph{how different} the individual is from the others; this is done to avoid the convergence of the population to the individual with highest fitness (not necessarily the optimal)
	\item \emph{Generation} -- The most fit individuals get to breed and produce new generations. Crossover (takes some parts of the \emph{parents} to create a new individual) and mutation (randomly changes some of the individual's values, not to have repeated exact populations) are the most common genetic operators. Crossover and mutation are applied to the population randomly, depending on a probability that may be set, in order to control how often such operators are applied. After the generation the process restarts with the selection step (step~\ref{algo:ga_select})
	\item \emph{Termination} -- For bounded optimization, the GA stops whenever the maximal value is obtained by the fitness of a given individual. Other possibilities are to stop whenever some number of generations / time is reached, or whenever a number of generations no longer produce better results (best maximal fitness).
\end{enumerate}

It is important to note that GAs are so-called meta heuristic algorithms, as such, they may produce a good quality solution or they may not, there is no guarantee that such approaches do so. Although the seemingly random nature of GAs may lead to believe that good results are not often obtained, the search is guided by the fitness function, solutions are preserved according to their level of fitness; GAs attempt to exploit the structure of good candidate solutions by recombining them. Therefore, in practice, GAs often provide good solutions for hard problems.

\section{Problem statement and exact solution}\label{sec:exact}
In Section~\ref{sec:intro}, we described the problem we are interested in solving. In order to avoid ambiguity from the description of our problem statement, we present the formal definition of the objects involved. Such definitions and notations are used throughout the rest of the paper. 

\begin{definition}\label{def:net} (Network) A network $N$ is a directed and weighted graph $(V,E,c)$, where:
	\begin{itemize}
			\item $V$ is a set of nodes;
			\item $E\subseteq V\times V$ is a set of edges (ordered pairs of nodes);
			\item $c: E\to \mathds{Z}_{+}$ is a bandwidth capacity function. 
	\end{itemize}
\end{definition}

As an example, for the network presented in Fig.~\ref{fig:ex_topo}, the model representing it is $(V,E,c)$, where $V=\{N_1,N_2,N_3,N_4\}$, $E=\{(N_1,N_2),(N_1,N_3),(N_1,N_4),(N_2,N_3),(N_2,N_4),(N_3,N_4),(N_2,N_1),(N_3,N_1),(N_4,N_1),(N_3,N_2),(N_4,N_2),(N_4,N_3)\}$, and $c$ is defined as follows:
\[
	c(e)=\begin{cases}
				1, & \text{if } e \in \{(N_2,N_1),(N_2,N_4),(N_3,N_4)\}\\
        		3, & \text{if } e \in \{(N_3,N_1),(N_2,N_3)\}\\
        		4, & \text{if } e \in \{(N_4,N_1)\}\\
        		2, & \text{otherwise.}
        	\end{cases}\
\]

Without loss of generality, we assume that the network is connected (meaning all nodes are reachable from any node in the network); otherwise each connected component can be considered an independent network. Following the traditional notions in graphs, we formally consider a path in a graph as a sequence of edges, i.e., a path $p\in E^*$. As an example, the path $N_1\rightarrow N_2\rightarrow N_3\rightarrow N_4$ is formally the sequence of edges $(N_1,N_2)(N_2,N_3)(N_3,N_4)$. We note that, for convenience, we denote the source of an edge $e$ as $src(e)$, and its destination as $dst(e)$. Likewise, we denote the $i$-th edge of a path $p$ as $p_i$. Similarly, we denote the length of a path $p$ as $|p|$.

In our work, we are interested in centralized admission and routing in so-called transportation (or flow) networks. In such networks, nodes can receive and send flows, however, the capacity on the links cannot be exceeded. Informally, a (network) flow can be considered as ``a set of packets passing an observation point in the network during a certain time interval. All packets belonging to a particular flow have a set of common properties.'', following RFC 3917 \cite{RFC3917}. It is important to note that for the scope of our work, the common characteristics of the packets are found in the packet header (as those described in RFC 3917), however, the approach and formalism can be easily extended for considering some parts in the payload of the network packet (at least for non-encrypted protocols). Formally, a network flow is defined as follows.

\begin{definition}\label{def:flow} (Flow) A network flow $f$ in a network $N=(V,E,c)$ is a four-tuple, $(s,d,b,h)$. $s,d\in V$, are source ($s$) and destination ($d$) nodes for the flow, $b\in\mathds{Z}_{+}$ is the required bandwidth\footnote{We assume measured in a proper data transfer rate, e.g., megabytes per second.} of the flow, and finally $h\in\{0,1\}^k$ is the packet header (of $k$ bits) of the packets belonging to the flow, containing the common characteristics that distinguish it (from other flows). 
\end{definition}

For convenience, for a given flow $f=(s,d,b,h)$, we denote the source of that flow as $s(f)$, the destination as $d(f)$, the required bandwidth capacity as $b(f)$, and finally, the header as $h(f)$. Consider a flow with an eight-bit header $f=(N_1,N_2,2,00110011)$ for the network $N$, as shown in the example. The goal is to admit and send the flow with the characteristics encoded in $00110011$ from the source $N_1$ to the destination $N_2$, the bit rate at which the network packets with such characteristics arrive from $N_1$ is two (assume a proper bit rate unit). In order to be able to admit and send that flow, a path must be assigned to it; for example, a valid assignment for the flow $f$ is the path $(N_1,N_3)(N_3,N_4)(N_4,N_2)$. We note that the path $p$ can be empty (denoted $p=\epsilon$), and thus, it represents not admitting the flow (dropping packets at the source node). This is the general case for assigning network flows through a network. However, as previously stated, we focus on the prioritization of flows. Thus, there is an intrinsic notion of priority between flows. This priority can be a combination of different characteristics of the flows, for example, source, destination, type of application, etc. Without loss of generality we assume that all information which is relevant to distinguish between one flow and another can be found in the flow's header. For that reason, we formally define the concept of priority simply as:

\begin{definition}\label{def:prio} (Priority) A priority for the header of the packets belong to a flow $h(f)$ is a function $\mathcal{P}: \{0,1\}^k\to \mathds{Z}_{+}$.
\end{definition}

We note that the priority function maps a packet header to a given priority number. We assume that \emph{a larger priority value implies the more important the flows} with such header are to be considered. Having all the previous notions, we can formally state our problem of interest. 

\paragraph*{The Priority Flow Admission and Routing~(PFAR) problem} Given a network $N=(V,E,c)$, a finite set of flows $F$, and a priority function $\mathcal{P}$, find a path for each flow (mapping solution) $\mathcal{S}: F\to E^*$, such that: i) the paths are simple and valid for the flows and network; ii) the bandwidth of the allocated flows does not exceed the bandwidth capacity of each link; and finally iii) the value of the priority of the allocated flows is maximal. These three conditions are hereafter referred to as \emph{the properties of a PFAR solution}.

That a path $p$ is simple and valid for a given flow $f=(s,d,b,h)$ and network $N=(V,E,c)$ means that there are no repeated nodes in the path, and valid means that if $p\not=\epsilon$ (if it is not an empty path, i.e., the flow is to be dropped), $src(p_1)=s\wedge dst(p_{|p|})=d\wedge \forall i\in\{1,\ldots,|p| - 1\} dst(p_i)=src(p_{i+1})\wedge  \forall j\in\{1,\ldots,|p|\} p_j\in E$, that is, the path starts at $s$, finishes at $d$, it is connected, and respects the underlying network connectivity. That the bandwidth of the allocated flows does not exceed the capacity of each link is a restriction on each link, i.e., $\forall e\in E \sum_{f\in F^e} b(f) \leq c(e)$, where $F^e=\{f|f\in F\text{ and } \exists i\;\; \mathcal{S}(f)_i = e\}$, that is the sum of the required bandwidth of set of flows whose mapping (solution) transverses the link $e$ must be less or equal than the capacity of the edge $e$. Finally, that value of the priority of the allocated flows is maximal means that for any valid mapping of flows to paths $\mathcal{M}: F\to E^*$, $\mathcal{S}$ is optimal, i.e., $\sum_{f\in F^\mathcal{S}}\mathcal{P}(h(f))\geq \sum_{f\in F^\mathcal{M}}\mathcal{P}(h(f))$, where $F^\mathcal{S}=\{f|f\in F \text{ and } \mathcal{S}(f)\not=\epsilon\}$, and $F^\mathcal{M}=\{f|f\in F \text{ and } \mathcal{M}(f)\not=\epsilon\}$, that is, the sum of the priority of the set of admitted flows by the solution $\mathcal{S}$ is greater or equal than the sum of the priority of the set of admitted flows for any other valid mapping.

The PFAR problem is a typical combinatorial optimization problem, and thus, we present a straightforward formulation of the PFAR problem as an ILP program.

\subsection{An exact PFAR solution using an ILP program}
To solve the PFAR problem using ILP, a simple algorithm can be constructed. This algorithm is shown in Algorithm~\ref{algo:exact}. The notation for the PFAR problem has been previously presented. However, the particular notations for the ILP problem are presented in Table~\ref{tab:notation}. The ILP formulation for a given PFAR problem instance $\langle N=(V,E,c), F=\{(s_1,d_1,b_1,h_1)\ldots,(s_{|F|},d_{|F|},b_{|F|},h_{|F|})\}, \mathcal{P}\rangle$ is presented in Equation~\ref{eq:gen_ilp}. 

\begin{algorithm}[!htb]
    \SetKwInOut{Input}{input}\SetKwInOut{Output}{output}\SetKw{KwBy}{by}
    \small
    \DontPrintSemicolon
    \Input{A network $N=(V,E,c)$, a set of flows $F=\{(s_1,d_1,b_1,h_1)\ldots,(s_{|F|},d_{|F|},b_{|F|},h_{|F|})\}$, and a priority function $\mathcal{P}$}
    \Output{A mapping $\mathcal{S}: F\to E*$, with the properties of a PFAR solution}
    \textbf{Step 1}: Construct an ILP program for the PFAR instance $\langle N,F,\mathcal{P}\rangle$ as shown in Equation~\ref{eq:gen_ilp} (see Table~\ref{tab:notation} for the ILP notation).\;
    \textbf{Step 2}: Solve the ILP program, and obtain the solution for variables $\rho_{i,m}$.\;
    \textbf{Step 3}: Set $\mathcal{S} \leftarrow \emptyset$\;
    \textbf{Step 4}: 
    \ForEach{$i \in \{1,\ldots,|F|\}$}
    {
    	Set $p\leftarrow\epsilon$ \tcp{Assign to $p$ the empty path}
    	
    	\If{$\exists \rho_{i,m} = 1$}
    	{
    		Set $p\leftarrow paths(f_i)_m$ \tcp{Set the $m$-th path for the $i$-th flow}
    	}

    	Set $\mathcal{S}\leftarrow\mathcal{S}\cup\{(f_i,p)\}$ \tcp{Add the mapping of the path $p$ to flow $f_i$}
    }
    \Return{$\mathcal{S}$}
    \caption{PFAR exact solution using ILP}\label{algo:exact}
\end{algorithm}

\begin{table}[!htb]
	\centering
	\begin{tabular}{|l|p{5cm}|}
		\hline
		\textbf{Symbol} 	& \textbf{Meaning} \\ \hline
		$f_i$           		& The $i$-th flow in lexicographical order of $F$\\ \hline
		$\alpha_i$           	& A variable denoting if the $i$-th flow should be admitted ($\alpha_i=1$) or dropped ($\alpha_i=0$)  \\ \hline
		$\mathbf{paths}(f_i)$	& A set of (simple and valid) paths for flow $f_i$ \\ \hline
		$\mathbf{paths}(f_i)_m$	& The $m$-th path in lexicographical order for the set of paths for flow $f_i$ \\ \hline
		$\rho_{i,m}$          	& A variable denoting if the $i$-th flow should be routed via the $m$-th path for $f_i$ (in lexicographical order of $\mathbf{paths}(f_i)$) \\ \hline
		$\varepsilon_{i,j,l}$	& A variable denoting if the $i$-th flow should be routed through the edge $(j,l)$ \\ \hline
        $\mathbf{1_{i}}((j,l))$   & A characteristic function, indicating if edge $(j,l)$ belongs to the any path for the $i$-th flow\\\hline
	\end{tabular}
	\caption{ILP program notation}
	\label{tab:notation}
\end{table}

\begin{strip}
\center
\begin{equation}\label{eq:gen_ilp}
	\begin{aligned}
        &\text{\textbf{maximize}} \sum_{i=1}^{|F|}\mathcal{P}(h(f_i))*\alpha_i \\
        &\text{\textbf{subject to}:}\\		
        &\sum_{m=1}^{|\mathbf{paths}(f_i)|}\rho_{i,m} = \alpha_i; & \forall i \in\{1,\ldots,|F|\} \text{ (i)}\\
        &\sum_{(j,l)\in\mathbf{paths}(f_i)_m}\varepsilon_{i,j,l} \geq |\mathbf{paths}(f_i)_m|*\rho_{i,m}; &\forall i \in\{1,\ldots,|F|\} \forall m \in\{1,\ldots,\mathbf{paths}(f_i)\}\text{ (ii)}\\
        &\sum_{i}^{|F|}b(f_i)*\varepsilon_{i,j,l} \leq c((j,l)); &\forall (j,l)\in E\text{ (iii)}\\
        &\sum_{(j,l)\in E \text{ and } \mathbf{1_{i}}((j,l))=0} \varepsilon_{i,j,l} = 0; &\forall i \in\{1,\ldots,|F|\} \text{ (iv)}\\
        &\alpha_i \in\{0,1\}; &\forall i \in\{1,\ldots,|F|\}\text{ (v)}\\
        &\rho_{i,m}\in\{0,1\}; &\forall i \in\{1,\ldots,|F|\} \forall m \in\{1,\ldots,\mathbf{paths}(f_i)\}\text{ (vi)}\\
        &\varepsilon_{i,j,l}\in\{0,1\}; &\forall i \in\{1,\ldots,|F|\} \forall (i,j)\in E \text{ (vii)}\\
    \end{aligned}
\end{equation}
\end{strip}

The key idea behind the algorithm is the construction of the ILP program. Essentially, the objective is to maximize the sum of the priorities of the admitted flows ($\alpha_i$). However, in order to admit a flow, one (and only one) path must be chosen ($\rho_{i,m}$), according to restriction (i). This path can be chosen only if all edges involved in the path ($\varepsilon_{i,j,l}$) can route the flow in question, as per restriction (ii). In order to determine if an edge can route the flow in question, the sum of the required bandwidth of all the candidate flows to be routed through that edge must be less or equal than the bandwidth capacity of the edge, as shown in restriction (iii). Finally, edges that cannot route a flow since they do not belong in any of the paths for that flow must be set to zero ($\varepsilon_{i,j,l}$), according to restriction (iv), and all variables involved are binary, according to restrictions (v), (vi), and (vii). Note that a given edge can route a flow in multiple paths, and thus restriction (ii) is not a strict equality, to be able to have edges that route a flow in other paths. Once having the solution to this ILP problem, constructing the paths from the edges that can route the flows is straightforward. As shown in Step 4 of the algorithm, if there a path for a flow, this path is the proper mapping for the flow in question. As an example, the ILP program constructed for the network shown in Fig.~\ref{fig:ex_topo} and the flows shown in Table~\ref{tab:ex_flows} can be found in Appendix~\ref{sec:ap_ilp_prog}. The correctness of Algorithm~\ref{algo:exact} is proven by the following statement.

\begin{proposition}\label{stm:algo_correct}
Given A PFAR instance $\mathcal{I}=\langle N=(V,E,c),F,\mathcal{P}\rangle$, $\nexists \mathcal{M}:F\to E*\;\;\sum_{f\in F^\mathcal{M}}\mathcal{P}(h(f))> \sum_{f\in F^\mathcal{S}}\mathcal{P}(h(f))$, where $\mathcal{S}$ is the solution to $\mathcal{I}$ provided by Algorithm~\ref{algo:exact}, both $\mathcal{M}$, $\mathcal{S}$ map to paths that are simple and valid for $N$ and $F$, and the bandwidth of the allocated flows does not exceed the bandwidth capacity of each link.
\end{proposition}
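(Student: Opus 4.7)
The plan is to prove optimality by exhibiting a correspondence between feasible mappings of the PFAR instance (those satisfying the simplicity, validity, and bandwidth conditions) and feasible solutions of the ILP in Equation~\ref{eq:gen_ilp}, under which the PFAR priority sum coincides with the ILP objective. Since modern ILP solvers return a globally optimal solution and the algorithm's output $\mathcal{S}$ is reconstructed directly from that solution, $\mathcal{S}$ must also be globally optimal for PFAR. I would prove this by contraposition: if some valid $\mathcal{M}$ strictly beats $\mathcal{S}$, I would encode $\mathcal{M}$ as a feasible ILP solution whose objective value strictly exceeds the optimum returned by the solver, a contradiction.

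First I would establish the \emph{soundness} direction (ILP $\to$ PFAR): every feasible ILP solution yields a valid PFAR mapping via Step~4. Constraint~(i) guarantees that for each $i$ either $\alpha_i=0$ (so $\mathcal{S}(f_i)=\epsilon$) or $\alpha_i=1$ with exactly one $\rho_{i,m}=1$, in which case $\mathcal{S}(f_i)=\mathbf{paths}(f_i)_m$. Since $\mathbf{paths}(f_i)$ enumerates simple valid paths by construction, $\mathcal{S}$ meets properties (i) of PFAR. Constraints~(ii) and (iv) together force $\varepsilon_{i,j,l}=1$ for every edge $(j,l)$ on the chosen path and $\varepsilon_{i,j,l}=0$ for edges not belonging to any path in $\mathbf{paths}(f_i)$; then constraint~(iii) applied edge-wise gives $\sum_{f\in F^e} b(f)\le c(e)$, so the bandwidth property (ii) of PFAR holds.

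Next I would establish the \emph{completeness} direction (PFAR $\to$ ILP): any valid $\mathcal{M}$ induces a feasible ILP solution of objective $\sum_{f\in F^\mathcal{M}}\mathcal{P}(h(f))$. Set $\alpha_i=1$ iff $\mathcal{M}(f_i)\neq\epsilon$; set $\rho_{i,m}=1$ iff $\mathcal{M}(f_i)=\mathbf{paths}(f_i)_m$ (well defined and unique because $\mathbf{paths}(f_i)$ contains every simple valid path for $f_i$); set $\varepsilon_{i,j,l}=1$ iff $(j,l)$ appears in $\mathcal{M}(f_i)$. A direct check of constraints (i)--(vii) follows from the PFAR-validity of $\mathcal{M}$: (i) holds by construction, (ii) holds with equality along the chosen path, (iii) reduces to the bandwidth property of $\mathcal{M}$, (iv) holds because edges outside $\mathbf{paths}(f_i)$ cannot occur on $\mathcal{M}(f_i)$, and (v)--(vii) are binary by definition. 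The ILP objective equals $\sum_{i:\alpha_i=1}\mathcal{P}(h(f_i))=\sum_{f\in F^\mathcal{M}}\mathcal{P}(h(f))$.

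The hard part, and the only place where the argument is not purely mechanical, is the bandwidth accounting: constraint~(iii) is stated on the auxiliary variables $\varepsilon_{i,j,l}$, not on the path variables $\rho_{i,m}$, so one must argue carefully that the interpretation of $\varepsilon$ coincides with edge-traversal by the reconstructed path. In the ILP $\to$ PFAR direction the inequality in (ii) could a priori let $\varepsilon$ be set on edges outside the chosen path, but this only inflates the left-hand side of (iii), so satisfaction of (iii) still implies the PFAR bandwidth condition; in the PFAR $\to$ ILP direction constraint~(iv), together with simplicity of $\mathcal{M}(f_i)$, is precisely what makes the natural encoding feasible. Once both directions are in hand, the contradiction argument in the opening paragraph closes the proof.
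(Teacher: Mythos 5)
Your proposal is correct and follows the same overall strategy as the paper: both argue by contradiction, reducing the optimality of $\mathcal{S}$ to the optimality of the ILP objective $\sum_{i=1}^{|F|}\mathcal{P}(h(f_i))\cdot\alpha_i$. The difference is one of completeness rather than of route. The paper's proof establishes only what you call the soundness direction (a feasible ILP solution yields a valid PFAR mapping whose priority sum equals the objective value) and then asserts that maximality of the objective transfers to maximality over all valid mappings; it never shows that an arbitrary valid $\mathcal{M}$ can be encoded as a feasible point of Equation~\ref{eq:gen_ilp}, which is exactly the step needed to derive the contradiction. Your completeness direction supplies that missing half, and your constraint-by-constraint check (in particular the observation that the inequality in restriction~(ii) can only inflate the left-hand side of~(iii), so it is harmless for soundness) is more careful than anything in the paper. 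You also make explicit a hypothesis the paper leaves tacit: the argument requires $\mathbf{paths}(f_i)$ to enumerate \emph{all} simple valid paths for $f_i$, since otherwise a mapping $\mathcal{M}$ using an unenumerated path has no ILP encoding and the proposition as stated would fail --- a point worth flagging given the paper's subsequent discussion of deliberately truncating the path sets. In short, your proof is the paper's proof done properly.
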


\begin{proof}
By contradiction. Assume that there exists such an $\mathcal{M}$ that $\mathcal{M}:F\to E*\;\;\sum_{f\in F^\mathcal{M}}\mathcal{P}(h(f))> \sum_{f\in F^\mathcal{S}}\mathcal{P}(h(f))$. As the solution $\mathcal{S}$ was constructed using the ILP shown in Equation~\ref{eq:gen_ilp}, we know that at most a single path is chosen for a flow due to restriction (i), and that the bandwidth of the allocated flows are less or equal than the bandwidth capacity of each link due to restriction (ii) and (iii). Furthermore, we know that $\sum_{i=1}^{|F|}\mathcal{P}(h(f_i))*\alpha_i$ is maximal. Additionally, we know that for any $\alpha_i\not=0 \implies \exists \rho_{i,m}\not=0$. According to Algorithm~\ref{algo:exact}, for all these flows a non-empty path is associated. Thus, $\sum_{f\in F^\mathcal{S}}\mathcal{P}(h(f))$ can be expressed as $\sum_{i=1}^{|F|}\mathcal{P}(h(f_i))*\alpha_i$, as for $\alpha_i=0$, empty paths are mapped, and furthermore, do not participate in the sum. As $\sum_{i=1}^{|F|}\mathcal{P}(h(f_i))*\alpha_i$ is maximal, for paths that are simple and valid for $N$ and $F$ and the bandwidth of the allocated does not exceed bandwidth capacity of each link, so is $\sum_{f\in F^\mathcal{S}}\mathcal{P}(h(f))$. Thus, a valid mapping $\mathcal{M}$ such that $\sum_{f\in F^\mathcal{M}}\mathcal{P}(h(f))> \sum_{f\in F^\mathcal{S}}\mathcal{P}(h(f))$ cannot exist, arriving to a contradiction.
\end{proof}

\paragraph*{Discussion -- on limiting the number of paths per flow} Indeed, the ILP formulation finds the exact solution for the PFAR problem. Nonetheless, as shown in Equation~\ref{eq:gen_ilp}, the general ILP form has many restrictions and variables that should be added to the problem for all paths for each flow. In the worst-case scenario, for a complete graph (full-mesh network) there are effectively $(|V| - 2)!\sum_{i=1}^{|V| - 2}\frac{1}{i!}$ paths for a single flow; note that the sum is always greater than or equal to one. For that reason, the size of the problem can rapidly grow (factorial growth) and the problem can become computationally expensive, further, finding a solution can be unfeasible in real time. There are different possibilities on how to limit the number of paths. Practically, it is often not desirable to have very long paths; mostly since the network packets may get long delays, and furthermore, the longer the path, the more edges (and thus resources) a flow occupies. Therefore, an easy way to limit the number of paths is to limit the length of the paths in question. For that reason, in our experimental section (Section~\ref{sec:experim}) we follow this approach. Other methods for limiting the number of paths are possible. However, the choice highly depends on the specific application, since not considering certain paths may imply loosing possibilities to route priority traffic that may be potentially admissible.

\paragraph*{Discussion -- on strict and non-strict prioritization of flow admission and routing} According to the PFAR problem formulation, flows with higher priority can be dropped, if the sum of priorities of lower priority flows surpasses that of the flow with higher priority. We formulated the problem in such a way as we consider this a desirable feature, since, few high priority flows can occupy all the network resources, and thus, a strict prioritization is not always desirable. Furthermore, by correctly assigning the priorities it can be guaranteed how many lower priority flows are necessary to overcome a higher priority flow. For example, as shown in Table~\ref{tab:ex_flows}, there is a factor of ten between each priority; that implies that for flows with a priority of 1000, effectively 101 flows of priority ten must be admitted in order to drop a single flow with priority 1000. However, we do recognize the interest of a strict flow prioritization. A simple pre-processing is necessary to use our proposed approaches. For a given instance $\langle N,F,\mathcal{P}\rangle$, first, group flows according to priority. That is to obtain non-empty sets with the different priorities, such that $F_n=\{f|f\in F \text{ and } \mathcal{P}(f)=n\}$. For each $F_n$, and in decreasing order of $n$, create a new instance $\langle N,F_n,\mathcal{P}\rangle$, and solve using our proposed approach. The advantage of such prioritization scheme is that it effectively reduces the number of flows per instance, and thus, the complexity of finding a solution. As discussed in this and the next section, the complexity of the proposed approach can be an issue.

\section{Heuristic solution}\label{sec:heuristic}

In the previous section, we present an exact solution to the PFAR problem. However, the problem can yield instances for which the exact solution can only be computed in exponential time with respect to the number of nodes and flows of the problem instance\footnote{Note that the proof of the problem's complexity is outside the scope of this work, nonetheless, proving that the problem is NP-hard can be done through a reduction from one of the original 21 problems of Karp \cite{karp21}, the Knapsack problem (see \cite{vazirani2013approximation} for the proper formulation).}. For large instances (e.g., 50 nodes and 5000 flows), this is rather unfeasible, for runtime traffic management, especially for highly variable traffic. For that reason, we turn our attention to heuristic approaches. We focus on a solution of the PFAR problem that is still valid and respects the bandwidth capacity of the links, however, we sacrifice the optimality for the sake of computational time; this is a usual trade-off for such hard problems. 

We focus on GAs, as optimization problems like the PFAR problem are good candidates for employing such heuristic approaches. We discuss the particular characteristics of a GA tailored for the PFAR problem. 

\paragraph*{Individual representation} The natural representation of an individual for an ILP problem is an array of values, where each position represents one variable of the problem. As all variables in the problem are binary, a simple bit-string representation is well-suited for the genotype of the individuals. In order to better illustrate this concept, consider individuals contain only the $\rho$ variables of the ILP problem formulation. An individual $I$ has the following form: 

\begin{small}
	\[I=\underbrace{\overbrace{\rho_{1,1} \ldots \rho_{1,|paths(f_1)|}}^{|paths(f_1)|} \ldots \overbrace{\rho_{|F|,1}\ldots \rho_{|F|,|paths(f_{|F|})|}}^{|paths(f_{|F|})|}}_{\sum_{i=1}^{|F|}|paths(f_i)|}\]
\end{small}

For the particular example shown in Fig.~\ref{fig:ex_topo}, and Table~\ref{tab:ex_flows}, there are five possible paths for each flow, and thus, all individuals have a total length of 20; the individual whose genotype (bit-string) representation is $I=10000100001000010000$,  has a phenotype representation of assigning each flow to its first path, i.e., $\rho_{i,1}=1\;\forall i\in\{1,2,3,4\}$.

If considering an individual as a bit-string of all the variables involved in the ILP problem, many individuals can represent unfeasible solutions. For example, if the bit-string in the position corresponding to admitting flow $i$ ($\alpha_i$) is equal to one but, all bits corresponding to which route to choose are zero, this is an unfeasible solution. To avoid having such fragile relationships in the individuals, different strategies have been implemented. The first is to \emph{consider only the route variables as the chromosome representation}. The reason is that these variables are enough to infer the proper value of other variables. For example, if $\rho_{i,m}=1$ it implies that all the edge variables corresponding to the $m$-th path for the $i$-th flow ($\varepsilon_{i,j,l}$) must be assigned to one, and $\alpha_i=1$. Additionally, such representation allows a less space (memory) consuming population.

\paragraph*{Initialization} As previously discussed, one of the peculiarities of the PFAR problem is its fragile relationships and dependencies between the variables. For that reason, in order to seed the initial population we assign each flow to at most one randomly chosen path. Furthermore, as assigning many of the flows to paths can surpass the link capacities, we do this assignation with a very low probability (0.01\%); this allows the genetic operators (mutation and crossover), and the fitness function to correctly guide the evolutionary process, and find very fit individuals. For the initial population we chose 100 individuals. The reason is that for large instances, the genotype may be a very long bit-string, and thus, processing long individuals can be time consuming; having more individuals may reduce the time to explore more generations and correctly guide the population toward a good fitness. 

In order to improve the search for fit individuals, one possible strategy is to create chromosomes with a \emph{smart} initialization \cite{greedy_GA}. We set one individual from the initial population to contain the solution provided by a \emph{greedy algorithm}. The greedy algorithm sorts the flows in non-increasing order of their priority, and tries to find a path that can accept the flow in that order. Later, a mapping is done to assign the found paths to the corresponding flows. Finally, a chromosome is created to reflect the choice of the greedy algorithm. This process is straightforward; for each flow if the flow is assigned a non-empty path, set the bit corresponding to the path to 1. We note that, any other polynomial-time algorithms can be considered for the initial population.

\paragraph*{Selection} In order to measure the fitness of each of the individuals in the population, a fitness function is defined. As previously stated, the fitness function can be the objective of the ILP problem, i.e., $\sum_{i=1}^{|F|}\mathcal{P}(h(f_i))*\alpha_i$, however, the problem does not have direct knowledge of the $\alpha$ variables. Thus, if the individuals are guaranteed to have at most a single path assigned, the fitness function can be modified to $\sum_{i=1}^{|F|}\sum_{m=1}^{|paths(f_i)|}\mathcal{P}(h(f_i))*\rho_{i,m}$. Nevertheless, when assigning a flow to a given path, the genetic algorithm has no knowledge whether this assignment surpasses the capacity of some link, providing an unfeasible solution. In order to avoid such cases, inspired by reinforcement learning \cite{sutton2018reinforcement}, we propose that the fitness function is capable of providing negative rewards for individuals which assign a flow to a path which surpasses a link capacity. Then, the proposed fitness function is: $\sum_{i=1}^{|F|}\sum_{m=1}^{|paths(f_i)|}\mathcal{P}(h(f_i))*\rho_{i,m}*fits((V,E,c),F,i)$, where $fits$ is a function that equals 1 if the flow $i$ fits in the graph $(V,E,c)$ considering all previous $i-1$ flows that fit, and -1 otherwise. Note that, such definition highly depends on the order of the flows, however, we show that such definition effectively guides the search in Section~\ref{sec:experim}. Additionally, the penalization is cumulative for flows surpassing the link capabilities, if previous flows do not fit, they are considered as not admitted, and a negative reward is added to the fitness. With such formulation, placing flows in paths that surpasses a link capacity is discouraged as the fitness decreases when choosing such individuals, and therefore, such individuals are less likely to survive and pass to the next generations.

Having the fitness function defined the selection process is quite straightforward. First, individuals are sorted by non-increasing order of fitness. The fittest individuals are directly copied to the next generation; this is known as elitism, and serves the purpose of not losing the best found individuals. The number of individuals depends on the crossover rate (see \emph{genetic operators}), with the formula $PS*(1-CR)$, where $PS$ is the population size and $CR$ is the crossover rate. 

\paragraph*{Genetic operators (generation)} The genetic operators are designed with the goal of avoiding unfeasible solutions, considering the fragile relationships between variables. The key principle is to preserve at most one path per flow. Taking this into consideration, the \emph{mutation operator} first selects a flow to mutate, then sets all bits corresponding to route this flow via a path to zero. Then, with low probability (0.01\%) a random path is chosen for the selected flow. With such mutation operator, it is guaranteed that at most one path is selected to route a given flow.

The \emph{crossover operator} is performed on a number of individuals w.r.t. the formula $CS=PS*CR$, where $CS$ is the crossover size (the individuals to reproduce), $PS$ is the population size and $CR$ is the crossover rate. Similarly, the mutation operator is applied to a number of individuals w.r.t. the formula $MS=CS*MR$, where $MS$ is the mutation size (the individuals to mutate), and $MR$ is the mutation rate. The rates start at $MR=0.1$ and $CR=0.9$, during the generations, the mutation rate increases while the crossover rate decreases while maintaining the constraint $MR+CR=1$, as proposed in \cite{dynamic_rates}. Nonetheless, as the genetic algorithm terminates in a fixed time (see \emph{termination}), the rates are adapted according to the elapsed time, and the estimated number of remaining generations; such time-adaptive rate scheme is a novel contribution and may help other GAs. 

\paragraph*{Termination} Indeed, the PFAR problem is a bounded optimization problem, i.e., the optimal cannot be greater than $\sum_{i=1}^{|F|}\mathcal{P}(h(f_i))$ as that implies that all flows are admitted. Thus, two criteria are chosen for terminating the genetic algorithm. The first is to obtain an individual with fitness equal to the optimal value, which happens only for instances in which the optimal is the maximal possible value. The second is if the generation process has been executed for a constant time. The motivation behind this is to find the best possible solution in an acceptable time. In our experiments we set this constant time to ten seconds, as we are interested in re-configuring the network with such time intervals. However, this constant can be changed. The algorithm repeats the generation step until one of the two conditions for termination is met.

\section{Experimental results}\label{sec:experim}
As previously discussed, there are theoretical advantages and disadvantages to both the exact and heuristic approaches presented in previous sections. On the one hand, the exact solution based on ILP may provide an optimal and exact solution. However, this approach is computationally hard, and may not be able to provide a solution in a reasonable amount of time, for large instances (w.r.t. the number of nodes and flows in the network). On the other hand, the heuristic solution based on GAs is guaranteed to provide the best known solution in a constant time. Nonetheless, there is no guarantee that the best solution is good. Furthermore, it is interesting to compare how far the solution is from the optimal. For those reasons, in this section we present an experimental evaluation. Before presenting the obtained results, we present the particularities of the generated instances.

\paragraph*{Network topology / bandwidth capacity} As both approaches are generic, and topology agnostic, the experimental evaluation can be done with any topology or bandwidth assignation. As we feel that completely aleatory instances may yield good but uninteresting results, we focused on networks with specific topologies and bandwidth relationships. First, due to our particular interest in such topologies, and second because we feel such topologies are widespread, and furthermore, they capture different topological arrangements. In order to better explain topologies of interest, we present an example of 11 nodes in Fig.~\ref{fig:exper_topo}. The topologies of interest are based on double-star networks. Such hierarchical networks are of general interest. Additionally, we include a full mesh sub-component, in order to provide a diverse sample. Finally, nodes in the first level are connected in a so-called bus configuration, and some nodes are isolated, connected only through the root of the topology. The bandwidth of the root node to the first level nodes is the highest ($L0$), while the return bandwidth for each link is at most half. Similarly, the bandwidth of the first level to the second level is inferior than the return links ($L1$), and finally, the smallest bandwidth is considered for nodes at the second level. This type of topology considers that traffic is concentrated at the root or much of the traffic passes through the root. In order to vary the passing bandwidth, a maximal constant for each level is defined, and a random fraction between one and one-tenth is assigned to each link. Such topologies can be generated via Algorithm~\ref{algo:topo_gen}. For our experimental evaluation we chose $L0=$30Mbps, $L1=$10Mbps, and $L2=$2Mbps.

\begin{figure}[!htb]
    \centering
    \includegraphics[width=\columnwidth]{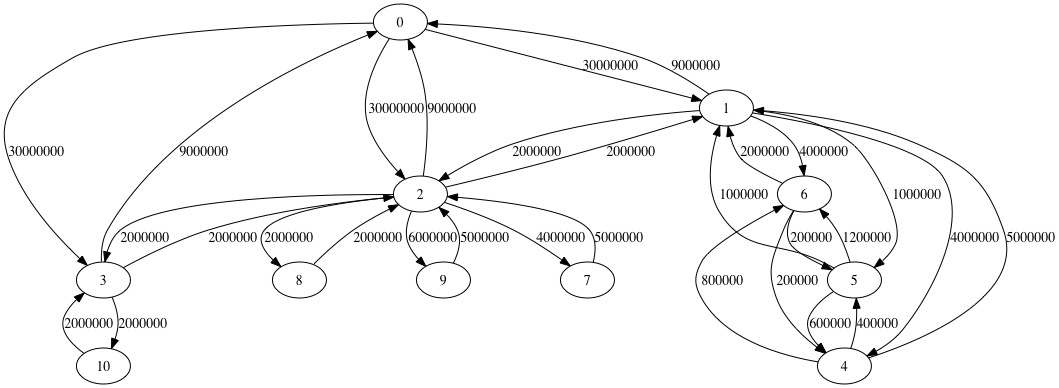}
    \caption{Example topology}
    \label{fig:exper_topo}
\end{figure}

\begin{algorithm}[!htb]
    \small
    \SetKwInOut{Input}{input}\SetKwInOut{Output}{output}\SetKw{KwBy}{by}
    \DontPrintSemicolon
    \Input{A number of nodes $N$, maximal bandwidth constants $L0$, $L1$, $L2$}
    \Output{A graph with the desired topology / bandwidth properties}
    \textbf{Step 0}: To have a double-star base, solve (for the positive root) the equation $N=n^2+n+1$, where $n$ is the number of children per node.\;
    \textbf{Step 1}: Create the graph $G$ with $N$ nodes\; 
    \textbf{Step 2}: Add the links $(0,i)$ and $(i,0)$ to $G$ with $\frac{L0}{rand(1,10)}$ and $\frac{L0}{2*(rand(1,10))}$ bandwidth, respectively $\forall i \in \{1,\ldots,n\}$\;
	\textbf{Step 3}: Add the links $(i,i+1)$ and $(i+1,i)$ to $G$  with $\frac{L1}{rand(1,10)}$ bandwidth $\forall i \in \{1,\ldots,n -1\}$\tcp{bus-like connections for nodes of the first level.}
	\textbf{Step 4}: Add the links $(i,j)$ and $(j,i)$ to $G$ with $\frac{L1}{rand(1,10)}$ and $\frac{L1}{2*rand(1,10)}$ bandwidth, respectively  $\forall i \in \{1,\ldots,n\} \forall j \in \{i*n+1,\ldots,i*n+n\}$\;
	\textbf{Step 5}: Add the links $(j,k)$ and $(k,j)$ to $G$ with $\frac{L2}{rand(1,10)}$ bandwidth $\forall j \in \{i*n+1,\ldots,i*n+n\} \forall k \in \{2*i,\ldots,3*i\}$\tcp{mesh connections for the first sub-cluster}
    \textbf{Step 6}: Add the links $(0,i)$ and $(i,0)$ to $G$ with $\frac{L0}{rand(1,10)}$ and $\frac{L0}{2*rand(1,10)}$ bandwidth, respectively $\forall i \in \{n^2 + 1,\ldots,N - 1\}$\tcp{Add the remainder nodes to the root}
    \Return{$G$}
    \caption{Topology generation}\label{algo:topo_gen}
\end{algorithm}

\paragraph*{Flow characteristics and distribution} In our experiments, it is interesting to generate many flows, and furthermore, to generate instances in which the network capacity is surpassed by those flows (n.b. this implies that optimization bound is not reachable). To generate such congestion, a simple approach has been taken. First, flows must always fit all links, therefore, their bandwidth requirements are always inferior to the smallest bandwidth capacity of the graph, that is $\frac{L2}{rand(1-10)}$. Additionally, to force our algorithms to always choose between at least two flows, flows have random size between  $\frac{L2}{2* rand(1-10)}$ and $ \frac{L2}{rand(1-10)} - 1$. The amount of flows to generate depends on the sum of all the outgoing capacity of a node. For each node, flows are generated to randomly and uniformly distributed destination of nodes until all outgoing capacity of the node is surpassed. By doing so, congestion per node is guaranteed. Furthermore, the entire network gets congested, since links are involved in many paths, as flows are uniformly distributed. This configuration generates hard instances, in which for a network of two nodes hundreds of flows are generated, and a choice between them must be made. As the size of instances increases, instances become harder.

Finally, the priority distribution of flows has been chosen in the following manner. First, the number of different priorities has been limited to five, as we feel in realistic environments a small number is enough to prioritize traffic. The priorities are 1, 10, 100, 1000, and 10000. The desired distribution of flows is $50\%$, $27\%$, $13\%$, $7\%$ and $3\%$, respectively. Note that, since flows are randomly generated, such distribution may differ slightly. 

\paragraph*{Experimental setup} The experiments have been executed on an Ubuntu 20.04 LTS, kernel 5.4.0-47 with 16GB of RAM and eight Intel Core Processor (Broadwell). All software modules have been developed using C++, and compiled with the GNU Compiler Collection. For solving the ILP program, the C++ interface of the Gurobi\footnote{Note that, the CPLEX solver has been also tested and its performance for our instances is inferior compared to Gurobi thus, justifying our choice.} \cite{gurobi} solver has been used. The instances were generated as previously described, the size of the instances varies between two and 50 nodes. 

\paragraph*{Experimental results} The experimental results are shown in Table~\ref{tab:results}. As can be seen, the optimal solution\footnote{Considering an optimality tolerance for the solver to provide faster results.} is often found in acceptable time, and furthermore the heuristic approach is well-motivated by the fact that the exact solution can sometimes be found only in very a long time (see rows corresponding to 34 and 48 nodes). Further, the results obtained by the heuristic solution are close-to-optimal. In order to better visualize the results, in Fig.~\ref{fig:ga_opt_ratio} we show the GA optimality ratio; likewise, we show the ILP running time in Fig.~\ref{fig:ilp_rt} (recall, the GA is configured to terminate in ten seconds or less).


\begin{table}[!htb]
    \centering
    \begin{tabular}{c|c|c|c}
        \hline
        \textbf{Instance}       & \textbf{Optimal}  & \textbf{ILP time(s)}  & \textbf{GA value} \\ \hline
        $|V|$ = 2 $|F|$ = 301 & 156251 & 0.019 & 156251 \\ \hline
        $|V|$ = 3 $|F|$ = 244 & 112070 & 0.018 & 112067 \\ \hline
        $|V|$ = 4 $|F|$ = 348 & 88656 & 0.031 & 88626 \\ \hline
        $|V|$ = 5 $|F|$ = 633 & 157947 & 0.37 & 157893 \\ \hline
        $|V|$ = 6 $|F|$ = 794 & 380215 & 0.31 & 380038 \\ \hline
        $|V|$ = 7 $|F|$ = 972 & 376780 & 0.54 & 375829 \\ \hline
        $|V|$ = 8 $|F|$ = 798 & 232019 & 0.66 & 231161 \\ \hline
        $|V|$ = 9 $|F|$ = 813 & 323810 & 0.59 & 323193 \\ \hline
        $|V|$ = 10 $|F|$ = 1146 & 381757 & 0.65 & 379072 \\ \hline
        $|V|$ = 11 $|F|$ = 1162 & 422800 & 4.6 & 419916 \\ \hline
        $|V|$ = 12 $|F|$ = 1466 & 588925 & 1.4 & 582441 \\ \hline
        $|V|$ = 13 $|F|$ = 1989 & 683404 & 3.2 & 646450 \\ \hline
        $|V|$ = 14 $|F|$ = 1627 & 609419 & 1.6 & 594272 \\ \hline
        $|V|$ = 15 $|F|$ = 2269 & 774025 & 3.4 & 742937 \\ \hline
        $|V|$ = 16 $|F|$ = 2091 & 871088 & 1.8 & 844333 \\ \hline
        $|V|$ = 17 $|F|$ = 1224 & 555164 & 1.3 & 550268 \\ \hline
        $|V|$ = 18 $|F|$ = 1607 & 669305 & 0.87 & 656240 \\ \hline
        $|V|$ = 19 $|F|$ = 1776 & 793718 & 1.8 & 783000 \\ \hline
        $|V|$ = 20 $|F|$ = 2584 & 882535 & 5 & 864960 \\ \hline
        $|V|$ = 21 $|F|$ = 2533 & 963391 & 6.1 & 945893 \\ \hline
        $|V|$ = 22 $|F|$ = 3888 & 1588351 & 3.8 & 1561413 \\ \hline
        $|V|$ = 23 $|F|$ = 2726 & 873393 & 7.5 & 839107 \\ \hline
        $|V|$ = 24 $|F|$ = 2898 & 1193431 & 5.6 & 1095777 \\ \hline
        $|V|$ = 25 $|F|$ = 2733 & 1039981 & 7.4 & 958294 \\ \hline
        $|V|$ = 26 $|F|$ = 2408 & 901821 & 19 & 845234 \\ \hline
        $|V|$ = 27 $|F|$ = 2569 & 832278 & 2.3 & 820566 \\ \hline
        $|V|$ = 28 $|F|$ = 2116 & 862648 & 12 & 840215 \\ \hline
        $|V|$ = 29 $|F|$ = 2709 & 1028725 & 9.9 & 980480 \\ \hline
        $|V|$ = 30 $|F|$ = 2325 & 795948 & 3.1 & 782428 \\ \hline
        $|V|$ = 31 $|F|$ = 3889 & 1498438 & 11 & 1434224 \\ \hline
        $|V|$ = 32 $|F|$ = 4625 & 1625792 & 8.9 & 1578978 \\ \hline
        $|V|$ = 33 $|F|$ = 3899 & 1375422 & 11 & 1358840 \\ \hline
        $|V|$ = 34 $|F|$ = 3191 & 1165834 & 220 & 1131384 \\ \hline
        $|V|$ = 35 $|F|$ = 4533 & 1560722 & 26 & 1383051 \\ \hline
        $|V|$ = 36 $|F|$ = 3076 & 976290 & 11 & 916303 \\ \hline
        $|V|$ = 37 $|F|$ = 4588 & 1694784 & 3.5 & 1691138 \\ \hline
        $|V|$ = 38 $|F|$ = 2721 & 1081105 & 17 & 1025966 \\ \hline
        $|V|$ = 39 $|F|$ = 3856 & 1563589 & 9.2 & 1458821 \\ \hline
        $|V|$ = 40 $|F|$ = 3810 & 1341606 & 26 & 1304439 \\ \hline
        $|V|$ = 41 $|F|$ = 3430 & 1286082 & 3.8 & 1254889 \\ \hline
        $|V|$ = 42 $|F|$ = 4561 & 1912342 & 4.9 & 1798727 \\ \hline
        $|V|$ = 43 $|F|$ = 6481 & 2297201 & 14 & 2230465 \\ \hline
        $|V|$ = 44 $|F|$ = 5519 & 1855886 & 15 & 1624787 \\ \hline
        $|V|$ = 45 $|F|$ = 5703 & 2102188 & 12 & 1911110 \\ \hline
        $|V|$ = 46 $|F|$ = 4926 & 1832282 & 39 & 1792867 \\ \hline
        $|V|$ = 47 $|F|$ = 4951 & 1891856 & 20 & 1800362 \\ \hline
        $|V|$ = 48 $|F|$ = 3617 & 1242983 & 490 & 1224566 \\ \hline
        $|V|$ = 49 $|F|$ = 4113 & 1547782 & 13 & 1464983 \\ \hline
        $|V|$ = 50 $|F|$ = 3518 & 1353846 & 34 & 1279134 \\ \hline
    \end{tabular}
    \caption{Experimental Results}
    \label{tab:results}
\end{table}

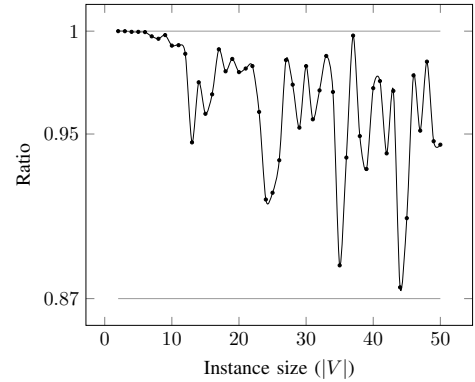
\begin{figure}[!htb]
    \centering
    \begin{tikzpicture}[scale=.75]
    \begin{axis}[
        xlabel=Instance size ($|V|$),
        ylabel=Ratio,
        legend style={at={(0.02,0.98)}, anchor=north west},
        ytick = {0, 0.8, 0.87, 0.95, 1}
        ]
        
    \addplot[smooth,mark=*, mark size=0.8pt, black] plot coordinates 
    {
		(2,1)
		(3,0.999973231016329)
		(4,0.999661613427179)
		(5,0.999658113164542)
		(6,0.999534473916074)
		(7,0.99747598067838)
		(8,0.996302026989169)
		(9,0.998094561625645)
		(10,0.992966730145092)
		(11,0.99317880794702)
		(12,0.988990109097084)
		(13,0.945926567593985)
		(14,0.975145179260903)
		(15,0.959835922612319)
		(16,0.969285537167313)
		(17,0.991180984357775)
		(18,0.980479751383898)
		(19,0.986496463479472)
		(20,0.980085775634961)
		(21,0.981837073420864)
		(22,0.983040272584586)
		(23,0.96074390337454)
		(24,0.918173736060149)
		(25,0.921453372705848)
		(26,0.937252514634279)
		(27,0.985927778939249)
		(28,0.973995186912854)
		(29,0.953102140999781)
		(30,0.983013965736455)
		(31,0.957146041411123)
		(32,0.97120541865134)
		(33,0.98794406371281)
		(34,0.970450338555918)
		(35,0.886161020348275)
		(36,0.93855616671276)
		(37,0.997848693402817)
		(38,0.94899755342913)
		(39,0.93299517967957)
		(40,0.972296635524886)
		(41,0.97574571450343)
		(42,0.940588555812716)
		(43,0.97094899401489)
		(44,0.875477804132366)
		(45,0.909105179936333)
		(46,0.978488573265469)
		(47,0.951637968217454)
		(48,0.98518322454933)
		(49,0.946504740331649)
		(50,0.944814993728977)        
    };
    
    \addplot[domain=2:50, gray!80]{0.87};
    \addplot[domain=2:50, gray!80]{1};
    \end{axis}

\end{tikzpicture}
    \caption{GA optimality ratio}
    \label{fig:ga_opt_ratio}
\end{figure}

\begin{figure}[!htb]
    \centering
    \begin{tikzpicture}[scale=.75]
    \begin{axis}[
        xlabel=Instance size ($|V|$),
        ylabel=Time (s),
        legend style={at={(0.02,0.98)}, anchor=north west},
        ytick = {10, 200, 400}
        ]

    \addplot[smooth,mark=*, mark size=0.8pt, black] plot coordinates {
      	(2,0.019)
		(3,0.018)
		(4,0.031)
		(5,0.37)
		(6,0.31)
		(7,0.54)
		(8,0.66)
		(9,0.59)
		(10,0.65)
		(11,4.6)
		(12,1.4)
		(13,3.2)
		(14,1.6)
		(15,3.4)
		(16,1.8)
		(17,1.3)
		(18,0.87)
		(19,1.8)
		(20,5)
		(21,6.1)
		(22,3.8)
		(23,7.5)
		(24,5.6)
		(25,7.4)
		(26,19)
		(27,2.3)
		(28,12)
		(29,9.9)
		(30,3.1)
		(31,11)
		(32,8.9)
		(33,11)
		(34,220)
		(35,26)
		(36,11)
		(37,3.5)
		(38,17)
		(39,9.2)
		(40,26)
		(41,3.8)
		(42,4.9)
		(43,14)
		(44,15)
		(45,12)
		(46,39)
		(47,20)
		(48,4.9e+02)
		(49,13)
		(50,34)
    };
    \addlegendentry{ILP running time}
    
    \addplot[smooth,mark=-, mark size=0.8pt, red] plot coordinates {
      	(2,10)
		(3,10)
		(4,10)
		(5,10)
		(6,10)
		(7,10)
		(8,10)
		(9,10)
		(10,10)
		(11,10)
		(12,10)
		(13,10)
		(14,10)
		(15,10)
		(16,10)
		(17,10)
		(18,10)
		(19,10)
		(20,10)
		(21,10)
		(22,10)
		(23,10)
		(24,10)
		(25,10)
		(26,10)
		(27,10)
		(28,10)
		(29,10)
		(30,10)
		(31,10)
		(32,10)
		(33,10)
		(34,10)
		(35,10)
		(36,10)
		(37,10)
		(38,10)
		(39,10)
		(40,10)
		(41,10)
		(42,10)
		(43,10)
		(44,10)
		(45,10)
		(46,10)
		(47,10)
		(48,10)
		(49,10)
		(50,10)
    };
    \addlegendentry{GA running time}
    
    \end{axis}

\end{tikzpicture}
    \caption{ILP vs. GA running time}
    \label{fig:ilp_rt}
\end{figure}
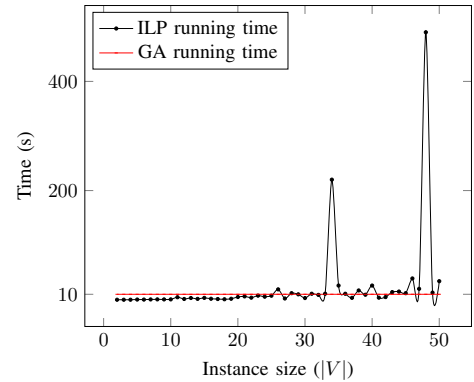

\section{Conclusion}\label{sec:conc}
In this paper, we have proposed a novel priority admission and routing scheme by leveraging the centralized SDN capabilities. We have proposed an exact method to solve the proposed problem via an integer linear programming program. Due to the potentially high computational complexity of the exact approach, a heuristic approach has been proposed; this heuristic approach is based on genetic algorithms and incorporates a fitness function inspired by reinforcement learning. Our experimental results show that both approaches are pertinent.

Despite the promising results obtained in this work, many aspects are left for future work. First, since the exact approach does not guarantee a \emph{good} running time nor the heuristic approach guarantees an optimality approximation, it is interesting to study approximation algorithms for the priority admission and routing problem \cite{vazirani2013approximation}. Another direction which we plan to study is to further tune the genetic algorithm parameters. Additionally, it is interesting to study other heuristic approaches such as particle swarm optimization. Finally, another interesting perspective is to study other functional restrictions on the path assignation for the flows. 

\section*{Acknowledgement} The authors would like to thank Jean-Baptiste Munier, Herve Fritsch and Marc Cartigny from Airbus Defence and Space for fruitful discussions. Likewise, the authors would like to thank Djamal Zeghlache from T\'el\'ecom SudParis / Institut Polytechnique de Paris for kindly providing the test platform. 

\bibliographystyle{IEEETran}
\bibliography{references}

\begin{thebibliography}{10}
\providecommand{\url}[1]{#1}
\csname url@samestyle\endcsname
\providecommand{\newblock}{\relax}
\providecommand{\bibinfo}[2]{#2}
\providecommand{\BIBentrySTDinterwordspacing}{\spaceskip=0pt\relax}
\providecommand{\BIBentryALTinterwordstretchfactor}{4}
\providecommand{\BIBentryALTinterwordspacing}{\spaceskip=\fontdimen2\font plus
\BIBentryALTinterwordstretchfactor\fontdimen3\font minus
  \fontdimen4\font\relax}
\providecommand{\BIBforeignlanguage}[2]{{%
\expandafter\ifx\csname l@#1\endcsname\relax
\typeout{** WARNING: IEEEtran.bst: No hyphenation pattern has been}%
\typeout{** loaded for the language `#1'. Using the pattern for}%
\typeout{** the default language instead.}%
\else
\language=\csname l@#1\endcsname
\fi
#2}}
\providecommand{\BIBdecl}{\relax}
\BIBdecl

\bibitem{sdn}
\BIBentryALTinterwordspacing
Opennetworking, ``Software-defined networking: The new norm for networks,''
  \emph{ONF White Paper}, 2012. [Online]. Available:
  \url{https://www.opennetworking.org}
\BIBentrySTDinterwordspacing

\bibitem{googlesdn}
J.~Wanderer, ``Case study: The google sdn wan,'' \emph{Computing.co.uk},
  vol.~11, 2013.

\bibitem{LBSDN}
J.~{Li}, X.~{Chang}, Y.~{Ren}, Z.~{Zhang}, and G.~{Wang}, ``An effective path
  load balancing mechanism based on sdn,'' in \emph{2014 IEEE 13th
  International Conference on Trust, Security and Privacy in Computing and
  Communications}, 2014, pp. 527--533.

\bibitem{httpQoESDN}
E.~{Liotou}, K.~{Samdanis}, E.~{Pateromichelakis}, N.~{Passas}, and
  L.~{Merakos}, ``Qoe-sdn app: A rate-guided qoe-aware sdn-app for http
  adaptive video streaming,'' \emph{IEEE Journal on Selected Areas in
  Communications}, vol.~36, no.~3, pp. 598--615, 2018.

\bibitem{copbook}
C.~H. Papadimitriou and K.~Steiglitz, \emph{Combinatorial Optimization:
  Algorithms and Complexity}.\hskip 1em plus 0.5em minus 0.4em\relax Upper
  Saddle River, NJ, USA: Prentice-Hall, Inc., 1982.

\bibitem{dantzig}
G.~Dantzig, \emph{Linear programming and extensions}.\hskip 1em plus 0.5em
  minus 0.4em\relax Princeton university press, 1963.

\bibitem{gomory}
R.~E. Gomory, ``Outline of an algorithm for integer solutions to linear
  programs,'' \emph{Bulletin Amer. Math. Soc}, vol.~64, no.~5, pp. 275--278,
  1958.

\bibitem{gurobi}
G.~Optimization \emph{et~al.}, ``Gurobi optimizer reference manual,''
  \emph{URL: http://www. gurobi. com}, vol.~2, pp. 1--3, 2012.

\bibitem{cplex}
I.~I. Cplex, ``V12. 1: User’s manual for {CPLEX},'' \emph{International
  Business Machines Corporation}, vol.~46, no.~53, p. 157, 2009.

\bibitem{BnB}
\BIBentryALTinterwordspacing
A.~H. Land and A.~G. Doig, ``An automatic method of solving discrete
  programming problems,'' \emph{Econometrica}, vol.~28, no.~3, pp. 497--520,
  1960. [Online]. Available: \url{http://www.jstor.org/stable/1910129}
\BIBentrySTDinterwordspacing

\bibitem{BnC}
M.~Padberg and G.~Rinaldi, ``A branch-and-cut algorithm for the resolution of
  large-scale symmetric traveling salesman problems,'' \emph{SIAM Review},
  vol.~33, no.~1, pp. 60--100, 1991.

\bibitem{sipser}
M.~Sipser, \emph{Introduction to the Theory of Computation}.\hskip 1em plus
  0.5em minus 0.4em\relax Thomson Course Technology Boston, 2006, vol.~2.

\bibitem{goldbergGAs}
D.~E. Goldberg, \emph{Genetic Algorithms in Search, Optimization and Machine
  Learning}, 1st~ed.\hskip 1em plus 0.5em minus 0.4em\relax USA: Addison-Wesley
  Longman Publishing Co., Inc., 1989.

\bibitem{RFC3917}
J.~Quittek, T.~Zseby, B.~Claise, and S.~Zander, ``{RFC} 3917: requirements for
  ip flow information export: Ipfix,'' \emph{Published by Internet Engineering
  Task Force (IETF). Internet Society (ISOC) RFC Editor. USA. out}, 2004.

\bibitem{karp21}
R.~M. Karp, \emph{Reducibility among Combinatorial Problems}.\hskip 1em plus
  0.5em minus 0.4em\relax Boston, MA: Springer US, 1972, pp. 85--103.

\bibitem{vazirani2013approximation}
V.~V. Vazirani, \emph{Approximation algorithms}.\hskip 1em plus 0.5em minus
  0.4em\relax Springer Science \& Business Media, 2013.

\bibitem{greedy_GA}
Z.~{Wang}, H.~{Duan}, and X.~{Zhang}, ``An improved greedy genetic algorithm
  for solving travelling salesman problem,'' in \emph{2009 Fifth International
  Conference on Natural Computation}, vol.~5, 2009, pp. 374--378.

\bibitem{sutton2018reinforcement}
R.~S. Sutton and A.~G. Barto, \emph{Reinforcement learning: An
  introduction}.\hskip 1em plus 0.5em minus 0.4em\relax MIT press, 2018.

\bibitem{dynamic_rates}
A.~Hassanat, K.~Almohammadi, E.~Alkafaween, E.~Abunawas, A.~Hammouri, and
  V.~Prasath, ``Choosing mutation and crossover ratios for genetic
  algorithms—a review with a new dynamic approach,'' \emph{Information},
  vol.~10, no.~12, p. 390, 2019.

\end{thebibliography}

\appendices
\section{ILP Program Example}\label{sec:ap_ilp_prog}
In this section, we present an example for a particular ILP instance (the running example presented in Fig.~\ref{fig:ex_topo} and Table~\ref{tab:ex_flows}) in Equation~\ref{eq:ex_ilp}. It is important to note that the ILP formulation largely depends on the resulting $\mathbf{paths}(f_i)$ function, which is a set of paths for a given flow $f_i$. For that reason, we present first the set of paths for each flow in Table~\ref{tab:paths}. Note that the paths for a flow depend on its source and destination. Therefore, the set of paths is repeated for many flows. Likewise, we consider that the paths are listed in the set's lexicographical order, i.e., the order that they appear listed. For example we refer to the first path for the first flow as $\rho_{1,1}=(N_1,N_2)$, $\rho_{1,2}=(N_1,N_3)(N_3,N_2)$, etc. Finally, note that the restrictions on the domain of the variables has been omitted, as all the variables are binary.

\begin{table}[!htb]
    \centering
    \begin{tabular}{|c|p{8cm}|}
        \hline
        \textbf{ID} & \textbf{paths}    \\ \hline
        1           & $\{(N_1,N_2),(N_1,N_3)(N_3,N_2),(N_1,N_4)(N_4,N_2),(N_1,N_3)(N_3,N_4)(N_4,N_2),(N_1,N_4)(N_4,N_3)(N_3,N_2)\}$\\ \hline
        2           & Same as the paths for flow with ID 1. \\ \hline
        3           & $\{(N_3,N_2),(N_3,N_1)(N_1,N_2),(N_3,N_4)(N_4,N_2),(N_3,N_1)(N_1,N_4)(N_4,N_2),(N_3,N_4)(N_4,N_1)(N_1,N_2)\}$  \\ \hline
        4           & Same as the paths for flow with ID 1.\\ \hline
    \end{tabular}
    \caption{Paths for flows}
    \label{tab:paths}
\end{table}

\begin{equation}
\label{eq:ex_ilp}
\begin{footnotesize}
\begin{aligned}
    & {\text{\textbf{maximize}}}	\;\; 10*\alpha_1 + 1000*\alpha_2+ \alpha_3 + 100*\alpha_4  \\
    & {\text{\textbf{subject to}:}}\\
    & \rho_{1,1}+\rho_{1,2}+\rho_{1,3}+\rho_{1,4}+\rho_{1,5} = \alpha_1\\
    & \rho_{2,1}+\rho_{2,2}+\rho_{2,3}+\rho_{2,4}+\rho_{2,5} = \alpha_2\\
    & \rho_{3,1}+\rho_{3,2}+\rho_{3,3}+\rho_{3,4}+\rho_{3,5} = \alpha_3\\
    & \rho_{4,1}+\rho_{4,2}+\rho_{4,3}+\rho_{4,4}+\rho_{4,5} = \alpha_4\\
    & \varepsilon_{1,1,2}\geq\rho_{1,1}\\
    & \varepsilon_{1,1,3}+\varepsilon_{1,3,2}\geq2*\rho_{1,2}\\
    & \varepsilon_{1,1,4}+\varepsilon_{1,4,2}\geq2*\rho_{1,3}\\
    & \varepsilon_{1,1,3}+\varepsilon_{1,3,4}+\varepsilon_{1,4,2}\geq3*\rho_{1,4}\\
    & \varepsilon_{1,1,4}+\varepsilon_{1,4,3}+\varepsilon_{1,3,2}\geq3*\rho_{1,5}\\
    & \varepsilon_{2,1,2}\geq\rho_{2,1}\\
    & \varepsilon_{2,1,3}+\varepsilon_{2,3,2}\geq2*\rho_{2,2}\\
    & \varepsilon_{2,1,4}+\varepsilon_{2,4,2}\geq2*\rho_{2,3}\\
    & \varepsilon_{2,1,3}+\varepsilon_{2,3,4}+\varepsilon_{2,4,2}\geq3*\rho_{2,4}\\
    & \varepsilon_{2,1,4}+\varepsilon_{2,4,3}+\varepsilon_{2,3,2}\geq3*\rho_{2,5}\\
    & \varepsilon_{3,3,2}\geq\rho_{3,1}\\
    & \varepsilon_{3,3,1}+\varepsilon_{3,1,2}\geq2*\rho_{3,2}\\
    & \varepsilon_{3,3,4}+\varepsilon_{3,4,2}\geq2*\rho_{3,3}\\
    & \varepsilon_{3,3,1}+\varepsilon_{3,1,4}+\varepsilon_{3,4,2}\geq3*\rho_{3,4}\\
    & \varepsilon_{3,3,4}+\varepsilon_{3,4,1}+\varepsilon_{3,1,2}\geq3*\rho_{3,5}\\
    & \varepsilon_{4,1,2}\geq\rho_{4,1}\\
    & \varepsilon_{4,1,3}+\varepsilon_{4,3,2}\geq2*\rho_{4,2}\\
    & \varepsilon_{4,1,4}+\varepsilon_{4,4,2}\geq2*\rho_{4,3}\\
    & \varepsilon_{4,1,3}+\varepsilon_{4,3,4}+\varepsilon_{4,4,2}\geq3*\rho_{4,4}\\
    & \varepsilon_{4,1,4}+\varepsilon_{4,4,3}+\varepsilon_{4,3,2}\geq3*\rho_{4,5}\\
    & 2*\varepsilon_{1,1,2}+2*\varepsilon_{2,1,2}+\varepsilon_{3,1,2}+2*\varepsilon_{4,1,2}\leq2\\
    & 2*\varepsilon_{1,2,1}+2*\varepsilon_{2,2,1}+\varepsilon_{3,2,1}+2*\varepsilon_{4,2,1}\leq1\\
    & 2*\varepsilon_{1,1,3}+2*\varepsilon_{2,1,3}+\varepsilon_{3,1,3}+2*\varepsilon_{4,1,3}\leq2\\
    & 2*\varepsilon_{1,3,1}+2*\varepsilon_{2,3,1}+\varepsilon_{3,3,1}+2*\varepsilon_{4,3,1}\leq3\\
    & 2*\varepsilon_{1,1,4}+2*\varepsilon_{2,1,4}+\varepsilon_{3,1,4}+2*\varepsilon_{4,1,4}\leq2\\
    & 2*\varepsilon_{1,4,1}+2*\varepsilon_{2,4,1}+\varepsilon_{3,4,1}+2*\varepsilon_{4,4,1}\leq4\\
    & 2*\varepsilon_{1,2,3}+2*\varepsilon_{2,2,3}+\varepsilon_{3,2,3}+2*\varepsilon_{4,2,3}\leq3\\
    & 2*\varepsilon_{1,3,2}+2*\varepsilon_{2,3,2}+\varepsilon_{3,3,2}+2*\varepsilon_{4,3,2}\leq2\\
    & 2*\varepsilon_{1,2,4}+2*\varepsilon_{2,2,4}+\varepsilon_{3,2,4}+2*\varepsilon_{4,2,4}\leq1\\
    & 2*\varepsilon_{1,4,2}+2*\varepsilon_{2,4,2}+\varepsilon_{3,4,2}+2*\varepsilon_{4,4,2}\leq2\\
    & 2*\varepsilon_{1,3,4}+2*\varepsilon_{2,3,4}+\varepsilon_{3,3,4}+2*\varepsilon_{4,3,4}\leq1\\
    & 2*\varepsilon_{1,4,3}+2*\varepsilon_{2,4,3}+\varepsilon_{3,4,3}+2*\varepsilon_{4,4,3}\leq2\\
    & \varepsilon_{1,2,1}+\varepsilon_{1,3,1}+\varepsilon_{1,4,1}+\varepsilon_{1,2,3}+\varepsilon_{1,2,4}=0\\
    & \varepsilon_{2,2,1}+\varepsilon_{2,3,1}+\varepsilon_{2,4,1}+\varepsilon_{2,2,3}+\varepsilon_{2,2,4}=0\\
    & \varepsilon_{3,2,3}+\varepsilon_{3,1,3}+\varepsilon_{3,4,3}+\varepsilon_{3,2,1}+\varepsilon_{3,2,4}=0\\
    & \varepsilon_{4,2,1}+\varepsilon_{4,3,1}+\varepsilon_{4,4,1}+\varepsilon_{4,2,3}+\varepsilon_{4,2,4}=0
\end{aligned}
\end{footnotesize}
\end{equation}

\end{document}